\newcommand{\memo}[1]  
{\bigskip \par\noindent
{\bf ¦F}
\fbox{\parbox[t]{7cm} 
{#1}
} \par\bigskip\noindent
}
\newcommand{\minimemo}[1]  
{\bigskip \par\noindent
{\bf ¦F}
\fbox{\parbox[t]{4cm} 
{#1}
} \par\bigskip\noindent
}
\newcommand{\memovar}[2]  
{\bigskip \par\noindent
\fbox{\parbox[t]{#1 cm} 

{#2}
} \par\bigskip\noindent
}
\title{A Fundamental Inequality for Lower-bounding the Error Probability for Classical and Quantum Multiple Access Channels and Its Applications}
\newtheorem{theo}{Theorem}
\newtheorem{cor}{Corollary}
\theoremstyle{definition}
\newtheorem{dfn}{Definition}
\theoremstyle{remark}
\def \Pe {{\rm Pe}}
\def \Pr {{\rm Pr}}
\def \Tr {{\rm Tr}}
\def \R  {\mathbb{R}}
\def \H  {\mathcal{H}}
\def \M  {\mathcal{M}}
\def \S  {\mathcal{S}}
\def \X  {\mathcal{X}}
\def \Y  {\mathcal{Y}}
\begin{document}
\maketitle
\begin{summary}
				In the study of the capacity problem for multiple access channels (MACs), a lower bound on the 
				error probability obtained by Han plays a crucial role in the converse parts of several kinds of channel coding theorems in the information-spectrum framework. 
				Recently, Yagi and Oohama showed a tighter bound than the Han bound by means of Polyanskiy's converse. 
				In this paper, we give a new bound which generalizes and strengthens the Yagi-Oohama bound, and demonstrate that the bound plays a fundamental role in deriving extensions of several known bounds. In particular, the Yagi-Oohama bound is generalized to two different directions; i.e, to general input distributions and to general encoders. 
				 In addition we extend these bounds to the quantum MACs and apply them to the converse problems for several information-spectrum settings. 
\end{summary}
\begin{keywords}
quantum channel, multiple access channel, error probability, information-spectrum
\end{keywords}

\section{Introduction} \label{sec:1}
The capacity problem for multiple access channels(MACs) has been an important topic since Shannon \cite{Shannon} studied it.
This problem is studied for several kinds of settings. 
For instance, in the classical case, Ahlswede \cite{Ahl} found the single-letterized capacity region for stationary and memoryless channels, Han \cite{Han}\cite{Hanbook} 
found the capacity region for the general channels by means of information spectrum method, 
and Winter \cite{Winter} found that for stationary and memoryless channels in the quantum case.
However, there remain some fundamental problems to be solved, including the exponential convergence 
of the error probability in the strong-converse region for stationary memoryless channels and the general information-spectrum formula for the capacity region 
in the quantum case.  So we still need to look for good lower bounds on the error probability. 

In this paper, we discuss lower bounds on the error probability for the following three settings, which are similar but slightly different from each other. 
\begin{itemize}
				\item Setting 1\\
								Let $\X_1$, $\X_2$ and $\Y$ be arbitrary discrete sets 
								on which 
								an input distribution $p(x_1, x_2) $
								and a channel $W(y | x_1, x_2)$ 
								are given. 
								For  a reversed channel $g(x_1,x_2|y)$, which means the probability of decoding (or estimating) the input $(x_1,x_2)$ from the observed output $y$,  
								the error probability is defined by
				\begin{align}
								\hspace{-2em}\Pe(g) := 1 - \sum_{x_1, x_2, y}p(x_1,x_2) W(y | x_1, x_2) g(x_1,x_2|y).  \label{eq:1}
				\end{align}
				\item Setting 2\\ Let $\X_1$, $\X_2$ and $\Y$ be arbitrary discrete sets on which a channel $W(y | x_1, x_2)$ is given.
							Given a pair of message sets $\M_1$ and $\M_2$ with $|\M_1| = M_1$ and $|\M_2| = M_2$ together with encoders $f_1(x_1|m_1)$ and $f_2(x_2|m_2)$, which means the probabilities of encoding the message $m_1$ and $ m_2$ to the inputs $x_1$ and $x_2$ respectively, 
we define the error probability for an arbitrary decoder $g(m_1,m_2|y) $ by 
				\begin{align}
								&\hspace{-2em}\Pe(g)  := 1 - \sum_{m_1,m_2}\frac{1}{M_1 M_2}\notag \\
								&\hspace{-2em}\cdot\sum_{x_1,x_2,y} f_1(x_1|m_1)f_2(x_2|m_2)W(y |x_1, x_2)g(m_1,m_2|y). \label{eq:2}
				\end{align}
				\item Setting 3\\
								Let $\X_1$, $\X_2$ and $\Y$ be arbitrary discrete set s on which a channel $W(y | x_1, x_2)$ is given. 
								Given a pair of codebooks ${\cal C}_1\subset\X_1$ and ${\cal C}_2\subset\X_2$ with  $|{\cal C}_1| = M_1$ and  $|{\cal C}_2| = M_2$, 
								we define the error probability for an arbitrary decoder $g(m_1,m_2|y) $ by 
								
				\begin{align}
								&\Pe(g) :=  \notag \\
								&1 - \frac{1}{M_1 M_2}\sum_{x_1\in{\cal C}_1, x_2\in{\cal C}_2, y}W(y|x_1,x_2)g(x_1,x_2|y). \label{eq:3}
				\end{align}
\end{itemize}

Note that Setting 3 can be regarded as special cases of both Setting~1 and Setting~2.  That is, Setting~3 is 
obtained by restricting $p(x_1, x_2)$ to the product of the uniform distributions on the codebooks in Setting~1, 
and is obtained by restricting encoders $f_1$, $f_2$ to be deterministic and injective in Setting~2. 
In the study of the capacity problem,  Setting~3 have been mainly dealt with so far, as mentioned below for \cite{Han}\cite{Hanbook} and \cite{YO}, while Poor and Verd\'u \cite{PV} 
discussed a lower bound of the error probability in Setting~1 and Polyanskiy \cite{Pol} used Setting~2 in his meta-converse argument. 

In Setting 3, Han \cite{Han}\cite{Hanbook} showed the following lower bound, which is known as the Han bound.
For an arbitrary positive number $\gamma$, it holds that
\begin{align}
				&\Pe(g) \geq \Pr\{(X_1, X_2, Y)\in L_1 \cup L_2 \cup L_3\} - 3\gamma,\label{eq:4}
\end{align}
where  $\Pr$ denotes the probability defined by the joint distribution $p(x_1, x_2, y) = 
p_{u, 1}(x_1) p_{u, 2}(x_2) W(y|x_1, x_2)$ for 
the uniform distributions $p_{u, 1}$ and $p_{u, 2}$ on the codebooks, and 
\begin{align}
				&L_1 := \{ (x_1, x_2, y)| W(y |x_1,x_2) \leq \gamma M_1 p(y | x_2)\}, \label{eq:5}\\
				&L_2 := \{ (x_1, x_2, y)| W(y |x_1,x_2) \leq \gamma M_2 p(y | x_1)\}, \label{eq:6}\\
				&L_3 := \{ (x_1, x_2, y)| W(y |x_1,x_2) \leq \gamma M_3 p(y)\},  \label{eq:7}\\
				&M_3 := M_1M_2. \label{eq:8}
\end{align}
This bound is a MAC extension of the Verd\'u-Han bound \cite{VH} and plays a crucial role in the converse parts of several coding theorems for general MAC channels.

Recently Yagi and Oohama \cite{YO} showed a tighter bound as follows. 
For an arbitrary conditional distribution $q(y |x_1,x_2)$, an arbitrary distribution $\pi$ on $\{1,2,3\}$, and an arbitrary positive number $\gamma'$, it holds that
\begin{align}
				\Pe(g) \geq \Pr\{(X_1, X_2, Y)\in\tilde{L}\} - \gamma' \sum_{i = 1}^3 \frac{\pi_i}{M_i}, \label{eq:9}
\end{align}
where
\begin{align}
				&\tilde{L} :=  \{ (x_1, x_2, y)| W(y |x_1,x_2) \leq \gamma' \tilde{q}(y | x_1, x_2)\}, \label{eq:10}\\
				&\tilde{q}(y|x_1,x_2) = \pi_1 q(y |x_2) + \pi_2 q(y|x_1) + \pi_3 q(y),\label{eq:11}
\end{align} 
and $q(y |x_1), q(y |x_2)$ and $q(y)$ are the conditional and marginal distributions defined from the joint distribution 
\begin{align}
				q(x_1,x_2,y) = p_{u,1}(x_1)p_{u,2}(x_2)q(y|x_1,x_2). \label{eq:12}
\end{align} 
If we set $\pi_i = \frac{M_i}{\sum_j M_j}$, $\gamma' = \gamma \sum_j M_j$ and $q(y|x_1, x_2)  = p(y|x_1, x_2) $, 
we can rewrite \eqref{eq:9} and \eqref{eq:10} as follows.
\begin{align}
				&\hspace{-2em}\Pe(g) \geq \Pr\{(X_1, X_2, Y)\in\tilde{L}\} - 3 \gamma,\label{eq:13}\\
				&\hspace{-2em}\tilde{L} = \{ (x_1, x_2, y)|\notag \\
				&\hspace{-2em}W(y |x_1,x_2) \leq \gamma (M_1 p(y |x_2) + M_2 p(y|x_1) + M_3 p(y))\}. \label{eq:14}
\end{align}
Since $L_1 \cup L_2 \cup L_3 \subset \tilde{L}$, \eqref{eq:9} is tighter than \eqref{eq:4}.

In what follows, we first show an extension of the Yagi-Oohama bound \eqref{eq:9} as Theorem~\ref{theo:1}  in section~\ref{sec:2}, where 
the Yagi-Oohama bound is extended from Setting~3 to Setting~1 and, in addition, is slightly strengthened as is seen in subsection~\ref{sec:4}. 
We also see in subsection~\ref{sec:5} 
that 
the theorem  yields a MAC version of the Poor-Verd\'{u} bound.  
In section~\ref{sec:6}, we use Theorem~\ref{theo:1} again to obtain an extension of 
the Yagi-Oohama bound to Setting~2. 
In section~\ref{sec:7}, we show that 
these results are naturally extended to the quantum case. Lastly in section~\ref{sec:8}, we apply 
them to obtain  some asymptotic results which correspond to the 
converse parts of the general capacity theorems obtained by Han in the classical case.
Concluding remarks are given in section~\ref{sec:9}. 
\section{A fundamental inequality on the error probability for the classical MACs} \label{sec:2}
The following inequality plays a fundamental role in this paper. 
\begin{theo} \label{theo:1} In Setting~1 given in section \ref{sec:1}, 
								for an arbitrary decoder $g$, arbitrary $\alpha_{1},\alpha_{2},\alpha_{3} \geq 0$, an arbitrary probability distribution $q(y)$ on $\Y$, and arbitrary nonnegative-valued functions $q_1(x_1,y), q_2(x_2,y)$ satisfying that $q(y) \geq q_1(x_1,y)$ and $q(y) \geq q_2(x_2,y)\;(\forall x_1, x_2, y)$, we have
				\begin{align}
								1 - \Pe(g) - \sum_i \alpha_i \leq \sum_{x_1,x_2,y}[p(x_1,x_2.y) - q_\alpha(x_1,x_2,y)]_+, \label{eq:15}
				\end{align}
				where
				\begin{align}
								&q_\alpha(x_1,x_2,y) = \alpha_1 q_2(x_2, y) + \alpha_2 q_1(x_1, y) + \alpha_3 q(y),\label{eq:16}\\
								&[t]_+ = \max\{0,t\}. \ \ (t \in \R) \label{eq:17}
				\end{align}
				
\end{theo}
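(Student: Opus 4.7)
The plan is to prove Theorem~\ref{theo:1} by a direct decomposition of the success probability $1-\Pe(g)$ relative to the auxiliary measure $q_\alpha$, following the general strategy behind Verd\'u--Han / Polyanskiy-type converses but adapted to the three-term structure coming from the MAC.

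First I would rewrite the success probability in the form
\begin{align*}
1-\Pe(g)=\sum_{x_1,x_2,y} p(x_1,x_2,y)\, g(x_1,x_2|y),
\end{align*}
where $p(x_1,x_2,y):=p(x_1,x_2)W(y|x_1,x_2)$ is the joint distribution induced by the input distribution and the channel. The next step is to use the pointwise bound $p\le q_\alpha+[p-q_\alpha]_+$, multiply both sides by $g(x_1,x_2|y)\in[0,1]$, and sum over $(x_1,x_2,y)$; since $g\le 1$ pointwise, the residual term can be relaxed to
\begin{align*}
1-\Pe(g)\le \sum_{x_1,x_2,y} q_\alpha(x_1,x_2,y)\,g(x_1,x_2|y) + \sum_{x_1,x_2,y}\bigl[p(x_1,x_2,y)-q_\alpha(x_1,x_2,y)\bigr]_+.
\end{align*}
The second term is already the quantity appearing on the right-hand side of \eqref{eq:15}, so the remaining task is to show that the first term is at most $\alpha_1+\alpha_2+\alpha_3$.

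To bound that first term I would expand $q_\alpha$ into its three pieces and handle them separately, exploiting the structure of $g$ as a conditional probability distribution (i.e.\ $\sum_{x_1,x_2}g(x_1,x_2|y)=1$). For the $\alpha_3 q(y)$ piece this is immediate: $\sum_{x_1,x_2,y} q(y)g(x_1,x_2|y)=\sum_y q(y)=1$. For the $\alpha_1 q_2(x_2,y)$ piece I would first apply the domination hypothesis $q_2(x_2,y)\le q(y)$ to upper-bound the summand by $q(y)g(x_1,x_2|y)$, then sum out $(x_1,x_2)$ to get $1$; the $\alpha_2 q_1(x_1,y)$ piece is handled symmetrically. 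Putting these three bounds together yields $\sum q_\alpha\, g\le \sum_i\alpha_i$ and hence \eqref{eq:15}.

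There is no real obstacle in this argument: the only subtle point is that one must not try to use $q_i\le q$ inside the $[\,\cdot\,]_+$ (which would go the wrong way), and one must use it only after the splitting into $q_\alpha\,g+[p-q_\alpha]_+\,g$ so that the domination is applied on the ``budget'' side of the inequality. The slight strengthening over the Yagi--Oohama bound referenced in subsection~\ref{sec:4} should emerge automatically, since the $[\,\cdot\,]_+$ on the right-hand side is pointwise sharper than the indicator of $\{p\le q_\alpha\}$-type event one would obtain from a cruder Markov-style estimate.
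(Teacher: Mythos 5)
Your proof is correct and is essentially the same Neyman--Pearson-type argument as the paper's: both reduce to the pointwise inequality $(p-q_\alpha)\,g \le [p-q_\alpha]_+$ valid for $0\le g\le 1$, and then bound $\sum_{x_1,x_2,y} q_\alpha\, g \le \sum_i\alpha_i$ using the domination hypotheses $q_1\le q$, $q_2\le q$ together with the normalization $\sum_{x_1,x_2}g(x_1,x_2|y)=1$. The only difference is presentational: you work upward from $1-\Pe(g)$ while the paper works downward from $\sum[p-q_\alpha]_+$.
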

\begin{proof}
				As in the proof of Neyman-Pearson's Lemma, it follows from   $0\leq g(x_1, x_2 | y) \leq 1$ that 
				\begin{align}
								\sum_{x_1,x_2,y}&[(p(x_1,x_2.y) - q_\alpha(x_1,x_2,y)]_+ \notag \\
								&\geq \sum_{x_1,x_2,y}\{(p(x_1,x_2.y) - q_\alpha(x_1,x_2,y)\}g(x_1,x_2|y) \notag\\
								&\geq 1 - \Pe(g) - \sum_i \alpha_i \sum_{x_1,x_2,y} q(y)g(x_1,x_2|y) \notag \\
								& = 1 -  \Pe(g) -  \sum_i \alpha_i ,  \label{eq:18}
				\end{align}
				where the second inequality follows from  $q(y) \geq q_1(x_1,y)$ and $q(y) \geq q_2(x_2,y)$.
\end{proof}
\section{Corollaries of Theorem~\ref{theo:1} in Setting~1}  \label{sec:3}
\subsection{A Yagi-Oohama-type bound} \label{sec:4}
The Yagi-Oohama bound is extended to the general input distributions in the following form. 
\begin{cor} \label{cor:1}
				\begin{align}
								\Pe(g) \geq \Pr\{p(X_1,X_2,Y) \leq q_\alpha(X_1,X_2,Y)\} -\sum_i\alpha_i  \label{eq:19}
				\end{align}
\end{cor}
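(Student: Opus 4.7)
The plan is to derive Corollary~\ref{cor:1} as a direct consequence of Theorem~\ref{theo:1} by upper-bounding the right-hand side of \eqref{eq:15} in terms of the probability of the event $\{p(X_1,X_2,Y) > q_\alpha(X_1,X_2,Y)\}$. The starting point is the observation that $q_\alpha(x_1,x_2,y)$ is nonnegative (since $\alpha_i\geq 0$ and $q$, $q_1$, $q_2$ are nonnegative), so $[p-q_\alpha]_+$ is either zero, when $p\leq q_\alpha$, or equals $p-q_\alpha$, which is itself at most $p$.

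First I would write, for every triple $(x_1,x_2,y)$,
\begin{align*}
[p(x_1,x_2,y)-q_\alpha(x_1,x_2,y)]_+ \leq p(x_1,x_2,y)\,\mathbf{1}\{p(x_1,x_2,y)>q_\alpha(x_1,x_2,y)\},
\end{align*}
and sum over $(x_1,x_2,y)$. The right-hand side becomes exactly $\Pr\{p(X_1,X_2,Y)>q_\alpha(X_1,X_2,Y)\}=1-\Pr\{p(X_1,X_2,Y)\leq q_\alpha(X_1,X_2,Y)\}$, where $\Pr$ is the probability under the joint distribution $p(x_1,x_2,y)=p(x_1,x_2)W(y|x_1,x_2)$.

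Next I would apply Theorem~\ref{theo:1} to the same $q,q_1,q_2,\alpha_1,\alpha_2,\alpha_3$ appearing in $q_\alpha$: inequality \eqref{eq:15} combined with the bound above yields
\begin{align*}
1-\Pe(g)-\sum_i\alpha_i \leq 1-\Pr\{p(X_1,X_2,Y)\leq q_\alpha(X_1,X_2,Y)\},
\end{align*}
and rearranging gives exactly \eqref{eq:19}.

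There is essentially no obstacle here; the only point that needs minor care is making the pointwise majorization $[p-q_\alpha]_+\leq p\cdot\mathbf{1}\{p>q_\alpha\}$ explicit and verifying that $q_\alpha\geq 0$, which is immediate from the hypotheses of Theorem~\ref{theo:1}. Thus Corollary~\ref{cor:1} is a one-line weakening of Theorem~\ref{theo:1}, obtained by replacing the $L^1$-type expression $\sum[p-q_\alpha]_+$ by the (larger) probability that $p$ strictly exceeds $q_\alpha$.
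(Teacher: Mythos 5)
Your proposal is correct and follows essentially the same route as the paper: both bound $[p-q_\alpha]_+$ by $p\cdot 1\{p>q_\alpha\}$ using the nonnegativity of $q_\alpha$, sum to obtain $1-\Pr\{p(X_1,X_2,Y)\leq q_\alpha(X_1,X_2,Y)\}$, and then combine with \eqref{eq:15}. No issues.
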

\begin{proof} Eq.\ \eqref{eq:19} immediately follows from \eqref{eq:15}, since 
				\begin{align}
								&\sum_{x_1,x_2,y}[p(x_1,x_2.y) - q_\alpha(x_1,x_2,y)]_+ \notag \\
								&= \sum_{x_1,x_2,y}(p(x_1,x_2.y) - q_\alpha(x_1,x_2,y))\notag\\
								&\hspace{4em}\cdot1\{p(x_1,x_2.y) > q_\alpha(x_1,x_2,y)\}\notag\\
								&\leq \sum_{x_1,x_2,y}p(x_1,x_2.y)\;1\{p(x_1,x_2.y) > q_\alpha(x_1,x_2,y)\}\notag \\
								&=1- \Pr\{p(X_1,X_2,Y) \leq q_\alpha(X_1,X_2,Y)\} ,  \label{eq:20}
				\end{align}
				where $1\{\ \}$ is the indicator function.
\end{proof}
In Setting~3, the original Yagi-Oohama bound  \eqref{eq:13} is obtained from \eqref{eq:19} by setting 
 $\alpha_i = \gamma'\pi_i/M_i$. 
\subsection{A Poor-Verd\'{u}-type bound} \label{sec:5}
While Corollary \ref{cor:1} can be regarded as a MAC extension of the Verd\'u-Han bound \cite{VH} (or the Hayashi-Nagaoka bound \cite{HN} in the sense that arbitrary output distributions are allowed), 
the following bound corresponds to  the Poor-Verd\'u bound \cite{PV}.
\begin{cor} \label{cor:2}
				\begin{align}
								\hspace{-2em}\Pe(g) \geq \Bigl(1 -\sum_i\alpha_i\Bigr)\, \Pr\{p(X_1,X_2,Y) \leq p_\alpha(X_1,X_2,Y)\}, \label{eq:21}
				\end{align}
				where 
				\begin{align}
								p_\alpha(x_1,x_2,y) = \alpha_1 p(x_2, y) + \alpha_2 p(x_1, y) + \alpha_3 p(y),\label{eq:22}
				\end{align}
				and $p(x_1, y), p(x_2,y)$ and $p(y)$ are marginal distributions defined from the joint distribution $p(x_1,x_2,y)$.
\end{cor}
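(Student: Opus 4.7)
The plan is to avoid direct use of Theorem~\ref{theo:1} --- substituting $q(y) = p(y)$, $q_1(x_1, y) = p(x_1, y)$, and $q_2(x_2, y) = p(x_2, y)$ into \eqref{eq:15} only yields $\Pe(g) \geq P - \sum_{(x_1, x_2, y) \in B} p_\alpha(x_1, x_2, y)$, where $B = \{(x_1, x_2, y) : p \leq p_\alpha\}$ and $P = \Pr\{B\}$; this bound can be negative and is generally weaker than the multiplicative form $(1 - A) P$ with $A := \sum_i \alpha_i$. Instead, I would mimic Poor and Verd\'u's argument, first reducing to the maximum a posteriori (MAP) decoder and then performing a pointwise case analysis per output $y$.

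First I would observe that $\Pe(g) \geq \Pe(g^\ast)$ for any decoder $g$, where $g^\ast$ is the deterministic MAP rule that, given $y$, outputs some $\hat{x}(y) = (\hat{x}_1(y), \hat{x}_2(y))$ attaining $\max_{(x_1, x_2)} p(x_1, x_2 | y)$. This follows from $\sum_{x_1, x_2} p(x_1, x_2 | y) g(x_1, x_2 | y) \leq \max_{x_1, x_2} p(x_1, x_2 | y)$, valid for each $y$, so it suffices to prove \eqref{eq:21} for $g^\ast$.

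Writing $B_y = \{(x_1, x_2) : (x_1, x_2, y) \in B\}$, the target reduces to the pointwise inequality $1 - p(\hat{x}(y) | y) \geq (1 - A) \Pr\{(X_1, X_2) \in B_y \mid Y = y\}$ for every $y$ with $p(y) > 0$, and averaging against $p(y)$ then gives \eqref{eq:21}. I would establish this pointwise bound by dichotomy. If $\hat{x}(y) \in B_y$, dividing the defining inequality $p(\hat{x}(y), y) \leq p_\alpha(\hat{x}(y), y)$ by $p(y)$ and using that the marginal posteriors $p(\hat{x}_1(y) | y)$ and $p(\hat{x}_2(y) | y)$ are each at most $1$ gives $p(\hat{x}(y) | y) \leq A$; combined with $\Pr\{B_y \mid y\} \leq 1$, this forces $1 - p(\hat{x}(y) | y) \geq 1 - A \geq (1 - A) \Pr\{B_y \mid y\}$. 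If instead $\hat{x}(y) \notin B_y$, then $B_y$ is disjoint from $\{\hat{x}(y)\}$, so $\Pr\{B_y \mid y\} \leq 1 - p(\hat{x}(y) | y)$ and the inequality follows immediately.

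The main subtlety --- and the step that I would expect to be the easiest trap --- is the recognition that Theorem~\ref{theo:1} does not suffice here; the reduction to the MAP decoder together with the dichotomy on $\hat{x}(y) \in B_y$ is a genuinely different ingredient from the Neyman-Pearson-type argument behind Theorem~\ref{theo:1}. Once this framework is adopted, both branches of the dichotomy are straightforward and the averaging step is routine.
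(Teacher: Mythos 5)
Your alternative proof is essentially correct: the reduction to the MAP decoder, the dichotomy on whether $\hat{x}(y)\in B_y$, and the averaging over $p(y)$ do yield \eqref{eq:21} (with the minor caveat that the step $1-\sum_i\alpha_i\ge(1-\sum_i\alpha_i)\Pr\{B_y\mid y\}$ in your first branch requires $\sum_i\alpha_i\le 1$; when $\sum_i\alpha_i>1$ the statement is vacuous, so nothing is lost). This is the original Poor--Verd\'u argument transplanted to the MAC setting, and it is a genuinely different route from the paper's.

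However, your premise that Theorem~\ref{theo:1} does not suffice is mistaken: the paper's proof of Corollary~\ref{cor:2} \emph{is} a direct application of Theorem~\ref{theo:1} with $q=p$, $q_1(x_1,y)=p(x_1,y)$, $q_2(x_2,y)=p(x_2,y)$. The step you overlooked is the pointwise domination $p_\alpha(x_1,x_2,y)\ge(\alpha_1+\alpha_2+\alpha_3)\,p(x_1,x_2,y)$, which follows from $p(x_1,x_2,y)\le p(x_2,y)$, $p(x_1,x_2,y)\le p(x_1,y)$ and $p(x_1,x_2,y)\le p(y)$. On the set $\{p>p_\alpha\}$ this gives $p-p_\alpha\le(1-\sum_i\alpha_i)\,p$ termwise, so the right-hand side of \eqref{eq:15} is at most $(1-\sum_i\alpha_i)\Pr\{p(X_1,X_2,Y)>p_\alpha(X_1,X_2,Y)\}$; substituting this into \eqref{eq:15}, the additive term $\sum_i\alpha_i$ cancels and \eqref{eq:21} drops out. (Your intermediate formula $\Pe(g)\ge P-\sum_{B}p_\alpha$ is also not what \eqref{eq:15} gives: it would require $\sum_{x_1,x_2,y}p_\alpha(x_1,x_2,y)=\sum_i\alpha_i$, whereas this sum actually equals $\alpha_1|\X_1|+\alpha_2|\X_2|+\alpha_3|\X_1|\,|\X_2|$.) In short, both arguments are valid: the paper's is a two-line corollary of the Neyman--Pearson-type bound, while yours needs the extra MAP-optimality ingredient but makes the origin of the multiplicative factor transparent, in the spirit of \cite{PV}. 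Note finally that the quantum analogue, Corollary~\ref{cor:6}, is obtained by the same substitution into Theorem~\ref{theo:2}, where a MAP-style reduction is not available; recognizing that the fundamental inequality already carries the multiplicative bound is therefore the more robust viewpoint in this paper.
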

\begin{proof}
				If $q = p$, the right hand side of \eqref{eq:15} is rewritten as 
				\begin{align}
								&\sum_{x_1,x_2,y}(p(x_1,x_2.y) - p_\alpha(x_1,x_2,y))\notag \\
								&\hspace{6em}\cdot1\{p(x_1,x_2.y) > p_\alpha(x_1,x_2,y)\}\notag\\
								&\leq (1 - \sum_i\alpha_i)\sum_{x_1,x_2,y}p(x_1,x_2.y)\notag\\
								&\hspace{6em}1\{p(x_1,x_2.y) > p_\alpha(x_1,x_2,y)\},\label{eq:23}
				\end{align}
				where the inequality follows from $p(x_1,x_2,y) \leq p(y)$, $p(x_1,x_2,y) \leq p(x_1, y)$, and $p(x_1,x_2,y) \leq p(x_1, y)$.
\end{proof}
\section{Corollaries of Theorem~\ref{theo:1} in Setting~2}  \label{sec:6}
An extension of the Yagi-Oohama bound to Setting~2, where general stochastic encoders are allowed, 
is also derived from Theorem~\ref{theo:1} as follows. 
\begin{cor}  \label{cor:3}
				In Setting~2,   
				for an arbitrary decoder $g$, arbitrary $\gamma'_{1},\gamma'_{2},\gamma'_{3} \geq 0$, an arbitrary distribution $q$ on $\Y$, and arbitrary conditional distributions $q_1(y|x_1), q_2(y|x_2)$ satisfying that
				\begin{align}
								q(y) &\geq q'_1(m_1,y) := \frac{1}{M_1} \sum_{x_1} f_1(x_1|m_1)q_1(y |x_1), \label{eq:24}\\	
								q(y) &\geq q'_2(m_2,y) := \frac{1}{M_2} \sum_{x_2} f_2(x_2|m_2)q_2(y |x_2), \label{eq:25}\\
								&\hspace{14em}(\forall m_1,m_2,y)\notag
				\end{align}
				we have
				\begin{align}
								&\hspace{-1em}1 - \Pe(g) -\sum_i \frac{\gamma'_i}{M_i}\notag\\
								&\hspace{-1em}\geq \sum_{x_1,x_2,y}p_1(x_1)p_2(x_2)\left[W(y|x_1,x_2) - \tilde{q}_{\gamma'}(y|x_1,x_2)\right]_+, \label{eq:26}
				\end{align}
				where
\begin{align}
				\tilde{q}_{\gamma'}(y|x_1,x_2) &= \gamma'_1 q(y |x_2) + \gamma'_2 q(y|x_1) + \gamma'_3 q(y), \label{eq:27}\\
				p_1(x_1) &= \sum_{m_1} \frac{1}{M_1}f_1(x_1|m_1), \label{eq:28}\\
				p_2(x_2) &= \sum_{m_2} \frac{1}{M_2}f_1(x_2|m_2), \label{eq:29}\\
								M_3 &= M_1M_2. \label{eq:30}
\end{align} 

\end{cor}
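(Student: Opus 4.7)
The plan is to reduce Setting~2 to Setting~1 by regarding the message pair $(m_1,m_2)$ as an input with the uniform distribution and folding the stochastic encoders into the channel. Concretely, I set $\tilde{p}(m_1,m_2,y) := \frac{1}{M_1 M_2}\sum_{x_1,x_2} f_1(x_1|m_1) f_2(x_2|m_2) W(y|x_1,x_2)$; with this identification the Setting~2 error probability in \eqref{eq:2} coincides exactly with the Setting~1 error probability in \eqref{eq:1} for the joint distribution $\tilde{p}$ and the decoder $g(m_1,m_2|y)$.

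I then apply Theorem~\ref{theo:1} to this reinterpretation with the choices $\alpha_i := \gamma'_i/M_i$, the distribution $q$ of the corollary, and the nonnegative functions $q'_1(m_1,y)$ and $q'_2(m_2,y)$ given by \eqref{eq:24}, \eqref{eq:25}. The hypotheses $q \geq q'_1$ and $q \geq q'_2$ required by Theorem~\ref{theo:1} are exactly the standing assumptions \eqref{eq:24}, \eqref{eq:25} of the corollary, so Theorem~\ref{theo:1} gives
\begin{align*}
1 - \Pe(g) - \sum_i \frac{\gamma'_i}{M_i} \leq \sum_{m_1,m_2,y} \bigl[\tilde{p}(m_1,m_2,y) - q_\alpha(m_1,m_2,y)\bigr]_+,
\end{align*}
where $q_\alpha(m_1,m_2,y) = \frac{\gamma'_1}{M_1} q'_2(m_2,y) + \frac{\gamma'_2}{M_2} q'_1(m_1,y) + \frac{\gamma'_3}{M_1 M_2} q(y)$.

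The remaining step is to convert the sum over messages above into a sum over channel inputs. Expanding both $\tilde{p}$ and $q_\alpha$ and using $\sum_{x_i} f_i(x_i|m_i) = 1$ to pull out a common factor $\frac{1}{M_1 M_2}\sum_{x_1,x_2} f_1 f_2$, the bracketed difference rewrites as $\frac{1}{M_1 M_2}\sum_{x_1,x_2} f_1(x_1|m_1) f_2(x_2|m_2) \bigl[W(y|x_1,x_2) - \tilde{q}_{\gamma'}(y|x_1,x_2)\bigr]$. Because $f_1(\cdot|m_1) f_2(\cdot|m_2)$ is a product probability distribution on $(x_1,x_2)$, convexity of $[\cdot]_+$ (Jensen's inequality) lets me pass it inside the sum, and then exchanging the orders of summation over $(m_1,m_2)$ and $(x_1,x_2)$ collapses $\frac{1}{M_i}\sum_{m_i} f_i(x_i|m_i)$ into $p_i(x_i)$, yielding \eqref{eq:26}. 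I expect the main subtlety to lie in the bookkeeping of the $M_i$ factors, so that the coefficient of each $q_i$ term in $\tilde{q}_{\gamma'}$ comes out as $\gamma'_i$ rather than $\gamma'_i/M_i$ after the encoder is absorbed; once this accounting is set up correctly, both the reduction step and the Jensen step are essentially routine.
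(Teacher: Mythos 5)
Your proposal is correct and follows essentially the same route as the paper: the paper likewise folds the encoders into an induced channel $V(y|m_1,m_2)=\sum_{x_1,x_2}f_1f_2W$ (your $\tilde p = \tfrac{1}{M_1M_2}V$), applies Theorem~\ref{theo:1} with the uniform distribution on $\M_1\times\M_2$ and $\alpha_i=\gamma'_i/M_i$, and then uses convexity of $t\mapsto[t]_+$ to move from messages back to channel inputs. Your accounting of the $M_i$ factors (so that $q_\alpha(m_1,m_2,y)$ is exactly the encoder average of $\tfrac{1}{M_1M_2}\tilde q_{\gamma'}(y|x_1,x_2)$) checks out.
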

\begin{proof}
Let a channel $V$ from $\M_1\times\M_2$ to $\Y$ be defined by
				\begin{align}
								V(y|m_1,m_2) = \sum_{x_1,x_2}f_1(x_1|m_1)f_2(x_2|m_2)W(y|x_1,x_2). \label{eq:31}
				\end{align}
				Then, replacing $\X_1$, $\X_2$ and $W$ with $\M_1$, $\M_2$, and $V$ in Theorem \ref{theo:1} and letting the input distribution be uniform on $\M_1 \times \M_2$, we have
				\begin{align}
								&1 - \Pe(g) -\sum_i\frac{\gamma'_i}{M_i}\notag \\
								&\leq \sum_{m_1,m_2,y}\left[\frac{1}{M_1M_2}V(y|m_1,m_2) - \tilde{q}'_{\gamma'}(m_1,m_2,y)\right]_+ ,\label{eq:32}
				\end{align}
				where
				\begin{align}
							\hspace{-1em}	\tilde{q}'_{\gamma'}(m_1,m_2,y) &= \frac{\gamma_1'}{M_1}q'_2(m_2,y)+\frac{\gamma'_2}{M_2}q'_1(m_1,y)+\frac{\gamma'_3}{M_3}q(y). \label{eq:33}
				\end{align}
				From the convexity of $t\rightarrow[t]_+$, we have
				\begin{align}
								1 - &\Pe(g) -\sum_i\frac{\gamma'_i}{M_i}\notag \\
								&\leq \sum_{m_1,m_2,x_1,x_2,y}\frac{1}{M_1M_2}f_1(x_1|m_1)f_2(x_2|m_2)\notag \\
								&\hspace{6em}\cdot [W(y|x_1,x_2) - \tilde{q}_{\gamma'}(y|x_1,x_2)]_+ .\label{eq:34}
				\end{align}
\end{proof}
This inequality immediately derives the following bound, which is the direct extension of the Yagi-Oohama bound to Setting~2.
\begin{cor} \label{cor:4}
				In Setting 2, for an arbitrary decoder $g$, an arbitrary distribution $\pi$ on $\{1,2,3\}$, an arbitrary number $\gamma \geq 0$, and an arbitrary channel $q(y|x_1,x_2)$, we have
				\begin{align}
								\hspace{-1.5em}\Pe(g)\geq \Pr\{W(Y|X_1,X_2) \leq \gamma \tilde{q}(Y|X_1,X_2)\} - \gamma\sum_i\frac{\pi_i}{M_i},	 \label{eq:35}
				\end{align}
				where the random variables $X_1$, $X_2$, and $Y$ are defined by the joint distribution 
				\begin{align}
								\hspace{-1em}p(x_1,x_2,y) = \frac{1}{M_1M_2}\sum_{m_1,m_2}f_1(x_1|m_1)f_2(x_2|m_2)W(y|x_1,x_2), \label{eq:36}
				\end{align}
				and $\tilde{q}$ is defined by \eqref{eq:11}.
\end{cor}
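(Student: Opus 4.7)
The plan is to specialize Corollary~\ref{cor:3} by choosing the auxiliary functions appropriately, and then to convert the resulting $[\,\cdot\,]_+$ expression into an information-spectrum probability via the standard indicator bound already used in the derivation of \eqref{eq:20}. Specifically, given the channel $q(y|x_1,x_2)$ of the corollary together with the induced input distributions $p_1, p_2$ of \eqref{eq:28}--\eqref{eq:29}, I would take $q(y)$, $q_1(y|x_1)$ and $q_2(y|x_2)$ to be the marginal and the two conditionals of the joint distribution $p_1(x_1)p_2(x_2)q(y|x_1,x_2)$. Setting $\gamma'_i := \gamma\pi_i$, the function $\tilde{q}_{\gamma'}$ of \eqref{eq:27} then coincides with $\gamma\tilde{q}$ for the $\tilde{q}$ of \eqref{eq:11}.

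Before invoking Corollary~\ref{cor:3} I need to verify its hypothesis $q(y) \geq q'_i(m_i,y)$. This reduces to the pointwise dominance $p_1(x_1)\geq f_1(x_1|m_1)/M_1$ and $p_2(x_2) \geq f_2(x_2|m_2)/M_2$, each of which is immediate from \eqref{eq:28}--\eqref{eq:29}; multiplying by the nonnegative $q_1(y|x_1)$ (resp.\ $q_2(y|x_2)$) and summing over $x_1$ (resp.\ $x_2$) yields the required inequalities. With the hypothesis in hand, Corollary~\ref{cor:3} reads
\begin{align*}
1 - \Pe(g) - \gamma\sum_i \frac{\pi_i}{M_i}
\leq \sum_{x_1,x_2,y} p_1(x_1)p_2(x_2)\bigl[W(y|x_1,x_2) - \gamma\tilde{q}(y|x_1,x_2)\bigr]_+.
\end{align*}

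The last step is the pointwise bound $[W-\gamma\tilde{q}]_+ \leq W\cdot 1\{W>\gamma\tilde{q}\}$, valid because $\gamma\tilde{q}\geq 0$; after summing, the right-hand side becomes $\Pr\{W(Y|X_1,X_2) > \gamma\tilde{q}(Y|X_1,X_2)\} = 1-\Pr\{W\leq\gamma\tilde{q}\}$ under the joint distribution \eqref{eq:36}, and rearranging delivers \eqref{eq:35}. There is no real obstacle here: the argument is a direct specialization of Corollary~\ref{cor:3} followed by the same Markov-type indicator step that converts Theorem~\ref{theo:1} into Corollary~\ref{cor:1}. The only point worth flagging is that the ``natural'' marginal construction from $p_1(x_1)p_2(x_2)q(y|x_1,x_2)$ automatically satisfies the dominance condition of Corollary~\ref{cor:3}, which is precisely why that hypothesis is stated in the averaged form \eqref{eq:24}--\eqref{eq:25}.
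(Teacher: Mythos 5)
Your proof is correct and is exactly the argument the paper intends: the paper offers no explicit proof of Corollary~\ref{cor:4}, merely stating that it ``immediately'' follows from Corollary~\ref{cor:3}, and your specialization (marginals of $p_1p_2q$, $\gamma_i'=\gamma\pi_i$, verification of the dominance hypothesis via $p_i(x_i)\geq f_i(x_i|m_i)/M_i$, then the indicator bound $[W-\gamma\tilde q]_+\leq W\,1\{W>\gamma\tilde q\}$) is precisely the intended route. You also correctly read the inequality in \eqref{eq:26} in the $\leq$ direction consistent with its proof (\eqref{eq:32}--\eqref{eq:34}), where the paper's displayed $\geq$ is evidently a typo.
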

\section{Lower bounds on the error probability for the quantum MACs} \label{sec:7}
In this section we extend the arguments of  previous sections 
to classical-quantum MACs. 
In the single access case, Hayashi and Nagaoka \cite{HN} extended the Verd\'u-Han bound into the quantum case,
and the present authors \cite{KN1}, \cite{KN2} extended the Poor-Verd\'u bound. 
Applying a similar argument to the ones developed there, we extend Theorem \ref{theo:1} as presented in Theorem \ref{theo:2}, from 
which the corresponding results to Corollaries \ref{cor:1}-\ref{cor:3} immediately follow. 

We begin with rewriting Setting~1 and Setting~2 to the quantum situation. Setting~3 is omitted since it 
is included in Setting~1 and Setting~2. 
 \begin{itemize}
				\item Setting Q1\\
								Let $\X_1$, $\X_2$ be arbitrary discrete sets on which an input distribution $p(x_1, x_2) $ is given.
								Let $\H$ be an arbitrary Hilbert space and $\S(\H)$ be the set of density operators on $\H$ and $W: \X_1 \times \X_2 \to \S(\H)$ be a classical-quantum channel (a quantum channel, for short).
								When a POVM (Positive Operator-Valued Measure) $Y = \{Y_{x_1,x_2}\}$, which satisfies that $\sum_{x_1,x_2}Y_{x_1,x_2} = I$ and $Y_{x_1,x_2} \geq 0\;\;(\forall x_1,x_2)$, represents a decoding (or estimating) process, the error probability is defined by
								\begin{align}
												\Pe(Y) := 1 - \sum_{x_1,x_2} p(x_1,x_2) \Tr[W_{x_1,x_2}Y_{x_1,x_2}]. \label{eq:37}
								\end{align}
				\item Setting Q2\\Let $\X_1$, $\X_2$ be arbitrary discrete sets, $\H$ be an arbitrary Hilbert space and a quantum channel $W: \X_1 \times \X_2 \to \S(\H)$ is given.
								As in Setting 2, given a pair of message sets $\M_1$ and $\M_2$ with $|\M_1| = M_1$ and $|\M_2| = M_2$ together with encoders $f_1(x_1|m_1)$ and $f_2(x_2|m_2)$, which means the probabilities of encoding the message $m_1$ and $ m_2$ to the inputs $x_1$ and $x_2$ respectively, 
we define the error probability for an arbitrary POVM $Y$ whose indexes are in $\M_1 \times \M_2$ by
				\begin{align}
								\Pe&(Y)  := 1 - \sum_{m_1,m_2}\frac{1}{M_1 M_2}\notag \\
								&\cdot \sum_{x_1,x_2} f_1(x_1|m_1)f_2(x_2|m_2)\Tr[W_{x_1, x_2}Y_{m_1,m_2}]. \label{eq:38}
				\end{align}
\end{itemize}


Theorem \ref{theo:1} is extended as follows.
\begin{theo} \label{theo:2}
				In Setting~Q1,
				for an arbitrary POVM $Y$, arbitrary $\alpha_{1},\alpha_{2},\alpha_{3} \geq 0$, an arbitrary density operator $\sigma$ on $\H$, and arbitrary positive semidefinite operators $\sigma_{x_1}, \sigma_{x_2}$ satisfying that $\sigma \geq \sigma_{x_1}$ and $\sigma \geq \sigma_{x_2}\;(\forall x_1, x_2)$, we have
				\begin{align}
								\hspace{-1em}1 - \Pe(Y) - \sum_i \alpha_i \leq \sum_{x_1,x_2}\Tr[(p(x_1,x_2)W_{x_1,x_2} - \sigma_{\alpha,x_1,x_2})_+], \label{eq:39}
				\end{align}
				where
				\begin{align}
								&\sigma_{\alpha, x_1,x_2} = \alpha_1 \sigma_{x_2} + \alpha_2 \sigma_{x_1} + \alpha_3 \sigma,\label{eq:40}\\
								&A_+ := A\{A \geq 0\}.\label{eq:41}
				\end{align}
Here and in the sequel, we use the notation $\{A\leq B\} = \{B\geq A\}$ to mean a projector on $\H$ which is defined as follows.
When A - B is spectrum-decomposed as  
\begin{align}
				A - B &= \sum_i \lambda_i E_i, \label{eq:42}\\
				\{A\leq B\} &:= \sum_{i: \lambda_i \leq 0} E_i. \label{eq:43}
\end{align}
\end{theo}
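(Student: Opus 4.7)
The plan is to mirror the proof of Theorem~\ref{theo:1}, replacing the pointwise inequality $0 \leq g(x_1, x_2|y) \leq 1$ by the operator inequality $0 \leq Y_{x_1, x_2} \leq I$ and using the standard quantum analogue of the Neyman--Pearson trick. Concretely, I would set $A_{x_1, x_2} := p(x_1,x_2)W_{x_1,x_2} - \sigma_{\alpha, x_1, x_2}$, which is Hermitian. From the spectral decomposition in \eqref{eq:42}--\eqref{eq:43} we have $(A_{x_1, x_2})_+ \geq A_{x_1, x_2}$ and $(A_{x_1, x_2})_+ \geq 0$, and combined with $0 \leq Y_{x_1, x_2} \leq I$ this gives, for each $(x_1, x_2)$,
\begin{align*}
\Tr[A_{x_1,x_2} Y_{x_1,x_2}] \leq \Tr[(A_{x_1,x_2})_+ Y_{x_1,x_2}] \leq \Tr[(A_{x_1,x_2})_+].
\end{align*}
Summing over $(x_1, x_2)$ therefore bounds $\sum_{x_1, x_2} \Tr[A_{x_1, x_2} Y_{x_1, x_2}]$ from above by the right-hand side of \eqref{eq:39}.

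Next, I would unpack this left-hand side. The $W$ part immediately reproduces $1 - \Pe(Y)$ via \eqref{eq:37}, so the task reduces to proving
\begin{align*}
\sum_{x_1, x_2} \Tr[\sigma_{\alpha, x_1, x_2} Y_{x_1, x_2}] \leq \alpha_1 + \alpha_2 + \alpha_3.
\end{align*}
The $\alpha_3 \sigma$ contribution is handled at once by the POVM completeness $\sum_{x_1, x_2} Y_{x_1, x_2} = I$ and $\Tr[\sigma] = 1$. For the $\alpha_1 \sigma_{x_2}$ contribution I would first sum over $x_1$ to produce the positive operator $B_{x_2} := \sum_{x_1} Y_{x_1, x_2}$, then invoke $\sigma_{x_2} \leq \sigma$ to replace $\sigma_{x_2}$ by $\sigma$ inside the trace, after which the remaining sum over $x_2$ again collapses to $\Tr[\sigma] = 1$. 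The $\alpha_2 \sigma_{x_1}$ term is symmetric.

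The main obstacle I anticipate lies in this last step. In the classical proof, summing the decoder over one of its arguments gives a scalar in $[0, 1]$, which directly absorbs one copy of $q(y)$. In the quantum case $B_{x_2} = \sum_{x_1} Y_{x_1, x_2}$ is only known to be positive semidefinite---it is generally not bounded above by $I$, so one cannot simply carry the classical bookkeeping over. This is exactly where the hypotheses $\sigma \geq \sigma_{x_1}$ and $\sigma \geq \sigma_{x_2}$ earn their keep: for any $B \geq 0$ they yield $\Tr[\sigma_{x_2} B] \leq \Tr[\sigma B]$, via $\Tr[(\sigma - \sigma_{x_2}) B] = \Tr[B^{1/2}(\sigma - \sigma_{x_2}) B^{1/2}] \geq 0$. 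That operator-monotonicity observation is the single new quantum ingredient needed to push the Theorem~\ref{theo:1} argument through to Setting~Q1.
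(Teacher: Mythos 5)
Your proposal is correct and follows essentially the same route as the paper's proof: bound $\sum_{x_1,x_2}\Tr[(p(x_1,x_2)W_{x_1,x_2}-\sigma_{\alpha,x_1,x_2})_+]$ from below by $\sum_{x_1,x_2}\Tr[(p(x_1,x_2)W_{x_1,x_2}-\sigma_{\alpha,x_1,x_2})Y_{x_1,x_2}]$ using $0\le Y_{x_1,x_2}\le I$, then use $\sigma\ge\sigma_{x_1}$, $\sigma\ge\sigma_{x_2}$ and POVM completeness to reduce the subtracted term to $\sum_i\alpha_i$. Your explicit justification of $\Tr[\sigma_{x_2}B]\le\Tr[\sigma B]$ for $B\ge 0$ merely spells out the step the paper attributes to the hypotheses $\sigma\ge\sigma_{x_1}$, $\sigma\ge\sigma_{x_2}$, and your identification of why the classical bookkeeping does not carry over verbatim is exactly the point.
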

\begin{proof}
				As in the classical case, it follows from $0 \leq Y_{x_1,x_2}\leq I$ that
				\begin{align}
								&\sum_{x_1,x_2}\Tr[(p(x_1,x_2)W_{x_1,x_2} - \sigma_{\alpha,x_1,x_2})_+] \notag \\
								&\geq \sum_{x_1,x_2}\Tr[(p(x_1,x_2)W_{x_1,x_2} - \sigma_{\alpha,x_1,x_2})Y_{x_1,x_2}] \notag\\
								&\geq 1 - \Pe(Y) - \sum_i \alpha_i \sum_{x_1,x_2} \Tr[\sigma Y_{x_1,x_2}]  \notag\\
								&= 1 - \Pe(Y) - \sum_i \alpha_i, \label{eq:44}
				\end{align}
				where the second inequality follows from  $\sigma \geq \sigma_{x_1}$ and $\sigma \geq \sigma_{x_2}$.
\end{proof}
Obviously, as Theorem \ref{theo:1} derives Corollaries \ref{cor:1} and \ref{cor:2}, Theorem \ref{theo:2} derives the following corollaries.
\begin{cor} \label{cor:5}
				\begin{align}
								&\Pe(Y) \notag\\
								&\geq \sum_{x_1,x_2} p(x_1,x_2)\Tr[W_{x_1,x_2}\{p(x_1,x_2)W_{x_1,x_2}\leq \sigma_{\alpha, x_1,x_2}\}] \notag \\
								&\hspace{14em}- \sum_i \alpha_i\label{eq:45}
				\end{align}
\end{cor}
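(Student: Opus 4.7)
The plan is to deduce Corollary~\ref{cor:5} from Theorem~\ref{theo:2} in direct analogy with how Corollary~\ref{cor:1} was obtained from Theorem~\ref{theo:1}. The classical argument rewrote $[p-q_\alpha]_+$ as $(p-q_\alpha)\cdot 1\{p > q_\alpha\}$ and then discarded the $-q_\alpha$ contribution using $q_\alpha \geq 0$; the quantum proof replays exactly this move, with the spectral projector $I - \{p(x_1,x_2)W_{x_1,x_2} \leq \sigma_{\alpha,x_1,x_2}\}$ taking the role of the indicator.

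Concretely, for each $(x_1,x_2)$ I would introduce
$$P_{x_1,x_2} := I - \{p(x_1,x_2)W_{x_1,x_2} \leq \sigma_{\alpha, x_1,x_2}\},$$
which, by \eqref{eq:42}--\eqref{eq:43}, is the projector onto the strictly positive eigenspaces of $p(x_1,x_2)W_{x_1,x_2} - \sigma_{\alpha,x_1,x_2}$. Using $A_+ = A\{A\geq 0\}$ together with the fact that $A_+$ acts as zero on the kernel of $A$, one gets
$$\Tr[(p(x_1,x_2)W_{x_1,x_2} - \sigma_{\alpha,x_1,x_2})_+] = p(x_1,x_2)\Tr[W_{x_1,x_2}P_{x_1,x_2}] - \Tr[\sigma_{\alpha,x_1,x_2}P_{x_1,x_2}].$$
Because $\sigma_{\alpha,x_1,x_2}\geq 0$ and $P_{x_1,x_2}\geq 0$, the second trace is non-negative and can be dropped. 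Using $\Tr[W_{x_1,x_2}P_{x_1,x_2}] = 1 - \Tr[W_{x_1,x_2}\{p(x_1,x_2)W_{x_1,x_2}\leq \sigma_{\alpha,x_1,x_2}\}]$ and summing over $x_1,x_2$, the right-hand side of \eqref{eq:39} is bounded above by $1 - \sum_{x_1,x_2}p(x_1,x_2)\Tr[W_{x_1,x_2}\{p(x_1,x_2)W_{x_1,x_2}\leq \sigma_{\alpha,x_1,x_2}\}]$. Inserting this into \eqref{eq:39} and rearranging yields \eqref{eq:45}.

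The argument is purely algebraic and I foresee no substantive obstacle; the only point needing a brief sanity check is the spectral-projector bookkeeping. Under the convention in \eqref{eq:43} the zero eigenspace of $p(x_1,x_2)W_{x_1,x_2} - \sigma_{\alpha,x_1,x_2}$ is placed inside $\{p(x_1,x_2)W_{x_1,x_2} \leq \sigma_{\alpha,x_1,x_2}\}$, so that $P_{x_1,x_2}$ as defined above is the projector onto strictly positive eigenvalues only. This is harmless, however, because $p(x_1,x_2)W_{x_1,x_2} - \sigma_{\alpha,x_1,x_2}$ vanishes on its kernel and hence $(p(x_1,x_2)W_{x_1,x_2} - \sigma_{\alpha,x_1,x_2})_+ = (p(x_1,x_2)W_{x_1,x_2} - \sigma_{\alpha,x_1,x_2})P_{x_1,x_2}$ holds regardless of where the zero eigenspace is assigned.
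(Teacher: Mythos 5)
Your proposal is correct and is exactly the argument the paper intends: the paper states that Corollary~\ref{cor:5} follows from Theorem~\ref{theo:2} just as Corollary~\ref{cor:1} follows from Theorem~\ref{theo:1}, and your spectral-projector version of that classical computation (dropping the nonnegative term $\Tr[\sigma_{\alpha,x_1,x_2}P_{x_1,x_2}]$ and using $\Tr[W_{x_1,x_2}]=1$) is the natural and correct way to fill in the details. The remark about the zero eigenspace being harmless is also accurate.
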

\begin{cor} \label{cor:6}
				\begin{align}
								&\Pe(Y) \geq (1 -\sum_i \alpha_i)\notag \\
								&\cdot\sum_{x_1,x_2} p(x_1,x_2)\Tr[W_{x_1,x_2}\{p(x_1,x_2)W_{x_1,x_2}\leq W_{\alpha, x_1,x_2}\}],  \label{eq:46}
				\end{align}
								where
								\begin{align}
												W_{\alpha, x_1,x_2} &= \alpha_1W_{p,x_2} + \alpha_2 W_{x_1,p} + \alpha_3 W_p, \label{eq:47}\\
								W_p &:= \sum_{x_1,x_2}p(x_1,x_2)W_{x_1,x_2},\label{eq:48}\\
								 W_{p,x_2} &:= \sum_{x_1}p(x_1,x_2)W_{x_1,x_2},\label{eq:49}\\
								 W_{x_1,p} &:= \sum_{x_2}p(x_1,x_2)W_{x_1,x_2}. \label{eq:50}
				\end{align}

\end{cor}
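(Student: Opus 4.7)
The plan is to deduce Corollary~\ref{cor:6} from Theorem~\ref{theo:2} by the quantum analog of the specialization ``$q=p$'' used in Corollary~\ref{cor:2}. First, I would apply Theorem~\ref{theo:2} with $\sigma := W_p$, $\sigma_{x_1} := W_{x_1,p}$, and $\sigma_{x_2} := W_{p,x_2}$. The hypotheses $\sigma \geq \sigma_{x_1}$ and $\sigma \geq \sigma_{x_2}$ are immediate from \eqref{eq:48}--\eqref{eq:50}, since each of those operators is a sum of positive semidefinite terms and dropping any such term preserves operator order. With these choices $\sigma_{\alpha,x_1,x_2}$ equals $W_{\alpha,x_1,x_2}$, so Theorem~\ref{theo:2} becomes
\begin{align*}
1 - \Pe(Y) - \sum_i \alpha_i \leq \sum_{x_1,x_2} \Tr[(p(x_1,x_2) W_{x_1,x_2} - W_{\alpha,x_1,x_2})_+].
\end{align*}

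The key step, mirroring the classical proof, is the operator inequality $W_{\alpha,x_1,x_2} \geq (\sum_i \alpha_i)\, p(x_1,x_2) W_{x_1,x_2}$, which holds because each of $W_{p,x_2}$, $W_{x_1,p}$, and $W_p$ separately dominates $p(x_1,x_2) W_{x_1,x_2}$ as a sum of positive semidefinite operators containing that term. Letting $P^{(x_1,x_2)}$ be the spectral projector of $p(x_1,x_2) W_{x_1,x_2} - W_{\alpha,x_1,x_2}$ onto its strictly positive eigenspace, one has
\begin{align*}
\Tr[(pW - W_\alpha)_+] = \Tr[(pW - W_\alpha)\, P^{(x_1,x_2)}] \leq \Bigl(1 - \sum_i \alpha_i\Bigr)\, p(x_1,x_2)\, \Tr[W_{x_1,x_2}\, P^{(x_1,x_2)}]
\end{align*}
by the operator inequality together with $P^{(x_1,x_2)} \geq 0$. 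Summing over $(x_1,x_2)$, substituting back, and using $\sum_{x_1,x_2} p(x_1,x_2) = 1$ along with the identity $I - P^{(x_1,x_2)} = \{p(x_1,x_2) W_{x_1,x_2} \leq W_{\alpha,x_1,x_2}\}$ (in the convention \eqref{eq:42}--\eqref{eq:43}) then yields the claimed inequality after rearrangement.

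The only point requiring care is the strict-versus-non-strict projector distinction: $(\cdot)_+$ naturally picks out the strictly positive eigenspace $E_+$ of $pW - W_\alpha$, whereas the statement uses the paper's non-strict projector $\{pW \leq W_\alpha\} = E_0 + E_-$. Writing $I = E_- + E_0 + E_+$, the identity $I - E_+ = E_0 + E_-$ matches the non-strict convention exactly, so the zero-eigenspace is absorbed consistently on both sides and the statement emerges with its inclusive ``$\leq$'' projector.
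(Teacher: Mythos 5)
Your proof is correct and follows essentially the same route as the paper, which derives Corollary~\ref{cor:6} from Theorem~\ref{theo:2} exactly as Corollary~\ref{cor:2} is derived from Theorem~\ref{theo:1}: set $\sigma = W_p$, $\sigma_{x_1}=W_{x_1,p}$, $\sigma_{x_2}=W_{p,x_2}$, and use the operator dominance $W_{\alpha,x_1,x_2}\geq(\sum_i\alpha_i)\,p(x_1,x_2)W_{x_1,x_2}$ on the positive part. Your handling of the strict-versus-non-strict spectral projector and of the trace monotonicity $A\leq B\Rightarrow\Tr[AP]\leq\Tr[BP]$ for $P\geq 0$ supplies precisely the details the paper leaves implicit.
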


Corollary \ref{cor:5} is a MAC extension of the Hayashi-Nagaoka bound, and Corollary \ref{cor:6} is a quantum MAC extension of the Poor-Verd\'u bound.

		%
	
Corollary \ref{cor:3} is also extended to the following, which can be proved almost in parallel with the classical one, noting that the convexity of $t\mapsto [t]_+$ should be replaced with the convexity of $A \mapsto {\rm Tr}[A_+]$.
\begin{cor} \label{cor:7}
				In Setting~Q2,
				for an arbitrary POVM $Y$, arbitrary $\gamma'_{1},\gamma'_{2},\gamma'_{3} \geq 0$, an arbitrary density operator $\sigma$ on $\H$, and arbitrary density operators $\sigma_{x_1}, \sigma_{x_2}$ satisfying that
				\begin{align}
								\sigma &\geq \sigma'_{m_1} := \frac{1}{M_1} \sum_{x_1} f_1(x_1|m_1)\sigma_{x_1},\label{eq:51} \\	
								\sigma &\geq \sigma'_{m_2} := \frac{1}{M_2} \sum_{x_2} f_2(x_2|m_2)\sigma_{x_2}, \label{eq:52}\\	
								&\hspace{14em}(\forall m_1,m_2)\notag
				\end{align}
				we have
				\begin{align}
								&1 -\Pe(Y) - \sum_i \frac{\gamma'_i}{M_i}\notag\\
								&\leq \sum_{x_1,x_2}p_1(x_1)p_2(x_2)\Tr[(W_{x_1,x_2} - \tilde{\sigma}_{\gamma', x_1, x_2})_+],\label{eq:53}
				\end{align}
				where
\begin{align}
				\tilde{\sigma}_{\gamma',x_1,x_2} &= \gamma'_1 \sigma_{x_2} + \gamma'_2 \sigma_{x_1} + \gamma'_3 \sigma, \label{eq:54}\\
				p_1(x_1) &= \sum_{m_1} \frac{1}{M_1}f_1(x_1|m_1), \label{eq:55}\\
				p_2(x_2) &= \sum_{m_2} \frac{1}{M_2}f_2(x_2|m_2). \label{eq:56}
\end{align} 
\end{cor}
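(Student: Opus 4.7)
The plan is to mirror the proof of Corollary~\ref{cor:3} at the operator level, replacing Theorem~\ref{theo:1} with Theorem~\ref{theo:2} and the scalar convexity of $t \mapsto [t]_+$ with the operator--trace convexity of $A \mapsto \Tr[A_+]$ on self-adjoint operators. The latter fact is a consequence of the variational identity $\Tr[A_+] = \max_{0 \leq P \leq I}\Tr[AP]$, which exhibits $\Tr[A_+]$ as a supremum of linear functionals of $A$ and is the natural quantum counterpart of $[t]_+ = \max_{0 \leq s \leq 1} s\,t$.

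First, I would recast Setting~Q2 as an instance of Setting~Q1 by defining the effective quantum channel $V : \M_1 \times \M_2 \to \S(\H)$ via $V_{m_1,m_2} = \sum_{x_1,x_2} f_1(x_1|m_1)f_2(x_2|m_2)W_{x_1,x_2}$, equipped with the uniform input distribution $1/(M_1M_2)$ on $\M_1\times\M_2$; the error probability $\Pe(Y)$ computed through $V$ coincides with the original one in \eqref{eq:38}. Applying Theorem~\ref{theo:2} to this reduced setting with $\alpha_i := \gamma'_i/M_i$ and with $\sigma'_{m_1}, \sigma'_{m_2}$ from \eqref{eq:51}--\eqref{eq:52} in the roles of $\sigma_{x_1}, \sigma_{x_2}$ is legitimate: the required majorizations $\sigma \geq \sigma'_{m_i}$ are exactly \eqref{eq:51}--\eqref{eq:52}, and each $\sigma'_{m_i}$ is positive semidefinite by construction. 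This produces the intermediate message-level bound
\begin{align*}
1 - \Pe(Y) - \sum_i \tfrac{\gamma'_i}{M_i} \leq \sum_{m_1,m_2} \Tr\!\left[\left(\tfrac{V_{m_1,m_2}}{M_1M_2} - \tfrac{\gamma'_1\sigma'_{m_2}}{M_1} - \tfrac{\gamma'_2\sigma'_{m_1}}{M_2} - \tfrac{\gamma'_3\sigma}{M_1M_2}\right)_{\!+}\right].
\end{align*}

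Next, using the normalizations $\sum_{x_i} f_i(x_i|m_i) = 1$ to insert the missing summations in the $\sigma$-terms and expanding via the definitions of $V_{m_1,m_2}$ and $\sigma'_{m_i}$, I would rewrite the operator inside $(\cdot)_+$ as the $f_1f_2$-weighted convex combination $\frac{1}{M_1M_2}\sum_{x_1,x_2} f_1(x_1|m_1)f_2(x_2|m_2)\bigl(W_{x_1,x_2} - \tilde{\sigma}_{\gamma',x_1,x_2}\bigr)$. Invoking the convexity of $A \mapsto \Tr[A_+]$ then pulls the convex combination outside the trace, and a routine interchange of summations together with the definitions \eqref{eq:55}--\eqref{eq:56} of $p_1, p_2$ delivers \eqref{eq:53}.

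The only ingredient beyond algebraic bookkeeping is the operator--trace convexity of $A \mapsto \Tr[A_+]$; once this replaces the scalar convexity used in the classical proof, every remaining manipulation is an exact operator analogue of a line from the proof of Corollary~\ref{cor:3}, so no genuine obstacle appears.
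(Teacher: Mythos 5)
Your proposal is correct and follows exactly the route the paper intends: reduce Setting~Q2 to Setting~Q1 via the induced channel $V_{m_1,m_2}$ with uniform inputs, apply Theorem~\ref{theo:2} with $\alpha_i=\gamma'_i/M_i$ and $\sigma'_{m_1},\sigma'_{m_2}$, and then use the convexity (sublinearity) of $A\mapsto\Tr[A_+]$ to pass from the message-level bound to the input-level bound, precisely mirroring the proof of Corollary~\ref{cor:3}. The variational identity $\Tr[A_+]=\max_{0\le P\le I}\Tr[AP]$ is a clean justification of the one nonclassical ingredient, so nothing is missing.
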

%
\section{Applications of Theorem \ref{theo:2} to the quantum information spectrum setting}  \label{sec:8}
In this section, we show applications of Theorem~\ref{theo:2} to the quantum MAC coding problems;
the converse parts of the $\varepsilon$-capacity region problem and the strong converse region problem, which Han \cite{Han} \cite{Hanbook} showed in the classical case.

Let us introduce the setting of the quantum MAC coding problem.
Let $\vec{\X_1} = \{\X_1^{(n)}\}_{n = 1}^\infty$ and $\vec{\X_2} = \{\X_2^{(n)}\}_{n = 1}^\infty$ be sequences of discrete sets, and $\vec{\H} = \{\H^{(n)}\}_{n = 1}^\infty$ be a sequence of Hilbert spaces, for which a sequence of quantum MACs $\vec{W} = \{W^{(n)}: \X_1^{(n)} \times \X_2^{(n)} \to \S(\H^{(n)})\}_{n = 1}^\infty$ is given.
Suppose that, for each $n$, a pair of encoders and a decoder are given in terms of conditional probability distributions $f_1(x_1^{(n)}|m_1^{(n)})$, $f_2(x_2^{(n)}|m_2^{(n)})$ and a POVM $Y^{(n)} = \{Y^{(n)}_{m_1^{(n)}, m_1^{(n)}}\}$ respectively, where $m_1^{(n)} \in \{1 ,\dots, M_1^{(n)}\},m_2^{(n)} \in\{1,\dots, M_2^{(n)}\}$. 
The error probability is then defined as follows:
				\begin{align}
								\Pe^{(n)}(Y^{(n)})  = 1 - \sum_{m^{(n)}_1,m^{(n)}_2} &\frac{1}{M_1^{(n)}M_2^{(n)}}f_1(x^{(n)}_1|m^{(n)}_1)f_2(x^{(n)}_2|m^{(n)}_2)\notag \\
								&\cdot \Tr [W^{(n)}_{x^{(n)}_1,x^{(n)}_2}Y^{(n)}_{m^{(n)}_1,m^{(n)}_2}].  \label{eq:57}
			\end{align}
Here, we call a triple of encoders and decoder $(f^{(n)}_1,f^{(n)}_2,Y^{(n)})$ whose error probability equals $\varepsilon_n$ an $(n, M_1^{(n)}, M_2^{(n)}, \varepsilon_n)$-code.

Now, we introduce the $\varepsilon$-capacity region $C(\varepsilon|\vec{W})$.
\begin{dfn} \label{dfn:1}
				The $\varepsilon$-capacity region $C(\varepsilon|\vec{W})$ is defined as
				\begin{align}
								C(\varepsilon|\vec{W}) := \{(R_1,R_2)|
												&\exists\{(n, M_1^{(n)}, M_2^{(n)}, \varepsilon_n)\text{-code}\}_{n = 1}^\infty\text{ s.t.}\notag\\ 
								&\limsup_{n \to \infty} \varepsilon_n \leq \varepsilon,\notag\\
								&\liminf_{n \to \infty} \frac{1}{n}\log M_1^{(n)} \geq R_1,\notag\\
								&\liminf_{n \to \infty} \frac{1}{n}\log M_2^{(n)} \geq R_2 \}.\label{eq:58}
				\end{align}
\end{dfn}
We also introduce $C^*(\vec{W})$ which represents the complement of the strong converse region.
\begin{dfn} \label{dfn:2}
				\begin{align}
								C^*(\vec{W}) := \{(R_1,R_2)|
												&\exists\{(n, M_1^{(n)}, M_2^{(n)}, \varepsilon_n)\text{-code}\}_{n = 1}^\infty\text{ s.t.}\notag\\ 
								&\liminf_{n \to \infty} \varepsilon_n < 1,\notag\\
								&\liminf_{n \to \infty} \frac{1}{n}\log M_1^{(n)} \geq R_1,\notag\\
								&\liminf_{n \to \infty} \frac{1}{n}\log M_2^{(n)} \geq R_2 \}.\label{eq:59}
				\end{align}
\end{dfn}
\vspace{1ex}
Next, we introduce the following quantities.
\begin{dfn} \label{dfn:3}
				\begin{align}
								&K(R_1,R_2|\vec{p_1},\vec{p_2},\vec{\sigma}) \notag \\
								&:= \limsup_{n \to \infty} \sum_{x_1^{(n)},x_1^{(n)}}p_1^{(n)}(x_1^{(n)})p_2^{(n)}(x_2^{(n)}) \Tr[W^{(n)}_{x_1^{(n)},x_2^{(n)}}\notag\\
								&\cdot\{W^{(n)}_{x_1^{(n)},x_2^{(n)}}\leq e^{nR_1}\sigma^{(n)}_{x_2^{(n)}}+e^{nR_2}\sigma^{(n)}_{x_1^{(n)}} +e^{n(R_1 + R_2)}\sigma^{(n)}\}],\label{eq:60} \\
								&K^*(R_1,R_2|\vec{p_1},\vec{p_2},\vec{\sigma}) \notag \\
								&:= \liminf_{n \to \infty} \sum_{x_1^{(n)},x_1^{(n)}}p_1^{(n)}(x_1^{(n)})p_2^{(n)}(x_2^{(n)}) \Tr[W^{(n)}_{x_1^{(n)},x_2^{(n)}}\notag\\
								&\cdot\{W^{(n)}_{x_1^{(n)},x_2^{(n)}}\leq e^{nR_1}\sigma^{(n)}_{x_2^{(n)}}+e^{nR_2}\sigma^{(n)}_{x_1^{(n)}} +e^{n(R_1 + R_2)}\sigma^{(n)}\}],\label{eq:61}
				\end{align}
				where $\vec{p_1} = \{p_1^{(n)}\}_{n = 1}^\infty$ and $\vec{p_2} = \{p_2^{(n)}\}_{n = 1}^\infty$ are sequences of probability distributions on $\vec{\X_1}$ and $\vec{\X_2}$, and $\vec{\sigma}$ is a sequence of a triple of density operators$(\sigma^{(n)}, \sigma^{(n)}_{x^{(n)}_1}, \sigma^{(n)}_{x^{(n)}_2})$ which satisfies that 
\begin{align}
		\sigma^{(n)} = \sum_{x^{(n)}_1} p_1^{(n)}( x^{(n)}_1 )\sigma^{(n)}_{x^{(n)}_1}		 \label{eq:62}
\end{align}
and
\begin{align}
		\sigma^{(n)} = \sum_{x^{(n)}_2} p_2^{(n)}( x^{(n)}_2 )\sigma^{(n)}_{x^{(n)}_2}		 \label{eq:63}
\end{align}
for each $n$.

\end{dfn}
With these notations, we have
\begin{theo} \label{theo:3}
				\begin{align}
								\hspace{-2em}C(\varepsilon|\vec{W}) \subset \bigcup_{\vec{p_1},\vec{p_2}} \bigcap_{\vec{\sigma}}\mathrm{Cl}( \{(R_1,R_2)| K(R_1,R_2|\vec{p_1},\vec{p_2},\vec{\sigma}) \leq \varepsilon\}), \label{eq:64}
				\end{align}
				where $\mathrm{Cl}(\cdot)$ denotes the closure operation.
\end{theo}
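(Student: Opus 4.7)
The strategy is to apply Corollary~\ref{cor:7} to each code in an achievable sequence, using parameters $\gamma'_i$ that precisely reproduce the exponential weights appearing inside the projector of \eqref{eq:60}, and then pass to $\limsup_n$. Fix $(R_1,R_2)\in C(\varepsilon|\vec{W})$ together with a witnessing sequence of $(n,M_1^{(n)},M_2^{(n)},\varepsilon_n)$-codes, and define the encoder-averaged input distributions
\begin{align*}
p_i^{(n)}(x_i):=\sum_{m_i}\frac{1}{M_i^{(n)}}f_i(x_i|m_i),\qquad i=1,2.
\end{align*}
Let $\vec{p_1},\vec{p_2}$ denote the resulting sequences; these will serve as the witnesses for the outer union. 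Now fix any $\vec{\sigma}$ satisfying \eqref{eq:62}-\eqref{eq:63} and, for arbitrary $R'_1<R_1$ and $R'_2<R_2$, set $\gamma'_1=e^{nR'_1}$, $\gamma'_2=e^{nR'_2}$, $\gamma'_3=e^{n(R'_1+R'_2)}$.

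Before invoking Corollary~\ref{cor:7}, its operator-domination hypotheses must be verified. Using \eqref{eq:62} and the definition of $p_1^{(n)}$, the state $\sigma^{(n)}$ decomposes as a sum over $m_1$ of positive semidefinite operators of the form $\frac{1}{M_1^{(n)}}\sum_{x_1}f_1(x_1|m_1)\sigma^{(n)}_{x_1}$, each of which is precisely the $\sigma'_{m_1}$ appearing in \eqref{eq:51}; hence $\sigma^{(n)}\geq\sigma'_{m_1}$ holds automatically for every $m_1$, and analogously for $m_2$ via \eqref{eq:63}. Corollary~\ref{cor:7} then applies, and after bounding $\Tr[(W^{(n)}_{x_1,x_2}-\tilde{\sigma}^{(n)}_{\gamma',x_1,x_2})_+]\leq 1-\Tr[W^{(n)}_{x_1,x_2}\{W^{(n)}_{x_1,x_2}\leq\tilde{\sigma}^{(n)}_{\gamma',x_1,x_2}\}]$ (via $A_+=A\{A\geq 0\}$ together with $\Tr W^{(n)}_{x_1,x_2}=1$) and rearranging, we obtain
\begin{align*}
\sum_{x_1,x_2}p_1^{(n)}(x_1)p_2^{(n)}(x_2)\Tr[W^{(n)}_{x_1,x_2}\{W^{(n)}_{x_1,x_2}\leq\tilde{\sigma}^{(n)}_{\gamma',x_1,x_2}\}]\leq\varepsilon_n+\sum_{i=1}^{3}\frac{\gamma'_i}{M_i^{(n)}}.
\end{align*}

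Taking $\limsup_n$, the left-hand side equals $K(R'_1,R'_2|\vec{p_1},\vec{p_2},\vec{\sigma})$ by \eqref{eq:60}. The rate assumption $\liminf_n\frac{1}{n}\log M_i^{(n)}\geq R_i>R'_i$ forces each $\gamma'_i/M_i^{(n)}$ to decay exponentially, and $\limsup_n\varepsilon_n\leq\varepsilon$ by hypothesis, so $K(R'_1,R'_2|\vec{p_1},\vec{p_2},\vec{\sigma})\leq\varepsilon$. Letting $R'_i\uparrow R_i$ places $(R_1,R_2)$ in $\mathrm{Cl}(\{K(\cdot,\cdot|\vec{p_1},\vec{p_2},\vec{\sigma})\leq\varepsilon\})$ for the chosen $\vec{\sigma}$; since $\vec{\sigma}$ was arbitrary, $(R_1,R_2)$ lies in the intersection over $\vec{\sigma}$ and hence in the union over $\vec{p_1},\vec{p_2}$. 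The one delicate step is reconciling Definition~\ref{dfn:3}'s averaging constraint on $\vec{\sigma}$ with Corollary~\ref{cor:7}'s stronger operator-domination requirement; this is exactly what the positive-semidefinite decomposition used above accomplishes.
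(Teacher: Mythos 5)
Your proof is correct and follows essentially the same route as the paper: apply Corollary~\ref{cor:7} to the witnessing codes with the encoder-averaged input distributions, convert $\Tr[(\cdot)_+]$ to the projector form appearing in $K$, take $\limsup_n$, and use the arbitrariness of the rate back-off to conclude closure membership for every admissible $\vec{\sigma}$. The only difference is bookkeeping — you set $\gamma'_i=e^{nR'_i}$ directly so that $\tilde{\sigma}_{\gamma'}$ matches the operator in \eqref{eq:60}, while the paper sets $\gamma'_i=e^{-n\gamma}M_i^{(n)}$ and then invokes monotonicity of $A\mapsto\Tr[A_+]$ — and your explicit verification that \eqref{eq:62}--\eqref{eq:63} imply the domination hypotheses \eqref{eq:51}--\eqref{eq:52} is a point the paper leaves implicit.
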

\begin{proof}
			If $(R_1, R_2) \in C(\varepsilon|	\vec{W})$, then from the difinition of $C(\varepsilon|\vec{W})$ there exists a sequence of $(n, M_1^{(n)}, M_2^{(n)}, \varepsilon_n)$-codes satisfying that
				\begin{align}
								& M_1^{(n)} \geq e^{n (R_1 - \gamma)},\label{eq:65}\\
								& M_2^{(n)} \geq e^{n (R_2 - \gamma)},\label{eq:66}
				\end{align}
				for an arbitrary positive number $\gamma$ and all sufficiently large $n$, and
				\begin{align}
								\limsup_{n \to \infty} \varepsilon_n \leq \varepsilon.\label{eq:67}
				\end{align}
				Using these codes, setting the sequences of the input distributions as
				\begin{align}
								p_1^{(n)}(x_1^{(n)}) &= \frac{1}{M_1^{(n)}} \sum_{m_1^{(n)}}f_1^{(n)}(x_1^{(n)}|m_1^{(n)}),\label{eq:68}\\
								p_2^{(n)}(x_2^{(n)}) &=\frac{1}{M_2^{(n)}} \sum_{m_2^{(n)}}f_2^{(n)}(x_2^{(n)}|m_2^{(n)}).\label{eq:69}
				\end{align}
				Now, from Corollary \ref{cor:7}, for arbitrary $\vec{\sigma}$ satisfying \eqref{eq:62} and \eqref{eq:63}, we have
				\begin{align}
								&1 - \varepsilon_n - 3 e^{-n\gamma} \leq \sum_{x_1^{(n)}, x_2^{(n)}}p_1^{(n)}(x_1^{(n)})p_2^{(n)}(x_2^{(n)})\notag\\
								&\hspace{-2em}\cdot\Tr\left[\left(W^{(n)}_{x_1^{(n)},x_2^{(n)}} - e^{-n\gamma} \left(M_1^{(n)} \sigma^{(n)}_{x_2^{(n)}} + M_2^{(n)} \sigma^{(n)}_{x_1^{(n)}} + M_1^{(n)}M_2^{(n)} \sigma^{(n)}\right)\right)_+\right].\label{eq:70}
				\end{align}
				From \eqref{eq:65}, \eqref{eq:66} and from the fact that for arbitrary Hermitian operators $A,B$, $\Tr[A_+] \leq \Tr[B_+]$ if $A \leq B$, we have
				\begin{align}
								&\hspace{-2em}1 - \varepsilon_n - 3 e^{-n\gamma} \notag\\
								&\hspace{-1em}\leq \sum_{x_1^{(n)}, x_2^{(n)}}p_1^{(n)}(x_1^{(n)})p_2^{(n)}(x_2^{(n)})\Tr\left[\left(W^{(n)}_{x_1^{(n)},x_2^{(n)}} -  A^{(n)}_{x_1^{(n)},x_2^{(n)}}\right)_+\right]\notag\\
								&\hspace{-1em}\leq \sum_{x_1^{(n)}, x_2^{(n)}}p_1^{(n)}(x_1^{(n)})p_2^{(n)}(x_2^{(n)})\Tr\left[\left(W^{(n)}_{x_1^{(n)},x_2^{(n)}} -  B^{(n)}_{x_1^{(n)},x_2^{(n)}}\right)_+\right]\notag\\
								&\hspace{-1em}\leq \sum_{x_1^{(n)}, x_2^{(n)}}p_1^{(n)}(x_1^{(n)})p_2^{(n)}(x_2^{(n)})\Tr\left[W^{(n)}_{x_1^{(n)},x_2^{(n)}}\left\{W^{(n)}_{x_1^{(n)},x_2^{(n)}} >  B^{(n)}_{x_1^{(n)},x_2^{(n)}}\right\}\right],\label{eq:71}
				\end{align}
				where
				\begin{align}
								\hspace{-1em}A^{(n)}_{x_1^{(n)},x_2^{(n)}} = 	e^{n(R_1 - 2 \gamma)} \sigma^{(n)}_{x_2^{(n)}} +e^{n(R_2 - 2 \gamma)} \sigma^{(n)}_{x_1^{(n)}} + e^{n(R_1 + R_2 - 3 \gamma)} \sigma^{(n)},\label{eq:72}\\
								\hspace{-1em}B^{(n)}_{x_1^{(n)},x_2^{(n)}} = 	e^{n(R_1 - 2 \gamma)} \sigma^{(n)}_{x_2^{(n)}} +e^{n(R_2 - 2 \gamma)} \sigma^{(n)}_{x_1^{(n)}} + e^{n(R_1 + R_2 - 4 \gamma)} \sigma^{(n)}. \label{eq:73}
				\end{align}
				Therefore, it follows that
				\begin{align}
								&\hspace{-2em}\varepsilon_n \geq \sum_{x_1^{(n)}, x_2^{(n)}}p_1^{(n)}(x_1^{(n)})p_2^{(n)}(x_2^{(n)}) \Tr\left[W^{(n)}_{x_1^{(n)},x_2^{(n)}}\left\{W^{(n)}_{x_1^{(n)},x_2^{(n)}} \leq B^{(n)}_{x_1^{(n)},x_2^{(n)}} \right\}\right]\notag\\
								&\hspace{14em}-3 e^{-n\gamma}. \label{eq:74}
				\end{align}
Hence, from \eqref{eq:67} and \eqref{eq:74} we have
\begin{align}
				K(R_1 - 2\gamma,R_2-2 \gamma|\vec{p_1},\vec{p_2},\vec{\sigma}) \leq \limsup_{n \to \infty} \varepsilon_n \leq \varepsilon. \label{eq:75}
\end{align}
Since $\gamma$ is arbitrary, \eqref{eq:75} implies that
\begin{align}
				(R_1,R_2)\in\mathrm{Cl}( \{(R_1,R_2)| K(R_1,R_2|\vec{p_1},\vec{p_2},\vec{\sigma}) \leq \varepsilon\}). \label{eq:76}
\end{align}
\end{proof}
We also have
\begin{theo} \label{theo:4}
				\begin{align}
								\hspace{-2em}C^*(\vec{W}) \subset \bigcup_{\vec{p_1},\vec{p_2}} \bigcap_{\vec{\sigma}}\mathrm{Cl}( \{(R_1,R_2)| K^*(R_1,R_2|\vec{p_1},\vec{p_2},\vec{\sigma}) < 1\}). \label{eq:77}
				\end{align}
\end{theo}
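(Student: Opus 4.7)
The plan is to replay the proof of Theorem~\ref{theo:3} almost verbatim, with only the last step adjusted to account for $\liminf$ in place of $\limsup$. Given $(R_1, R_2) \in C^*(\vec{W})$, by Definition~\ref{dfn:2} we may fix, for each $\gamma>0$, a sequence of $(n, M_1^{(n)}, M_2^{(n)}, \varepsilon_n)$-codes for which the rate lower bounds \eqref{eq:65}--\eqref{eq:66} hold for all sufficiently large $n$ and for which $\liminf_{n\to\infty}\varepsilon_n<1$. Defining the induced input distributions $p_1^{(n)}, p_2^{(n)}$ by \eqref{eq:68}--\eqref{eq:69}, we then apply Corollary~\ref{cor:7} exactly as in the proof of Theorem~\ref{theo:3}. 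This derivation is entirely non-asymptotic, so it yields inequalities \eqref{eq:70}--\eqref{eq:74} without any change.

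The only new ingredient is the passage from \eqref{eq:74} to the asymptotic statement about $K^*$. Fix any sequence $\vec{\sigma}$ satisfying \eqref{eq:62}--\eqref{eq:63}. From \eqref{eq:74} we have
\begin{align*}
\varepsilon_n \;\geq\; S_n(\vec{\sigma}) - 3e^{-n\gamma},
\end{align*}
where $S_n(\vec{\sigma})$ denotes the trace quantity on the right-hand side of \eqref{eq:74}, and by Definition~\ref{dfn:3} one has $\liminf_{n\to\infty} S_n(\vec{\sigma}) = K^*(R_1-2\gamma, R_2-2\gamma\,|\,\vec{p_1},\vec{p_2},\vec{\sigma})$. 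Taking $\liminf$ of both sides and using $\liminf(a_n - b_n) \geq \liminf a_n - \limsup b_n$ together with $3e^{-n\gamma}\to 0$, we deduce
\begin{align*}
K^*(R_1-2\gamma, R_2-2\gamma\,|\,\vec{p_1},\vec{p_2},\vec{\sigma}) \;\leq\; \liminf_{n\to\infty}\varepsilon_n \;<\; 1.
\end{align*}

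Since $\gamma>0$ is arbitrary and $(R_1-2\gamma, R_2-2\gamma)\to(R_1,R_2)$ as $\gamma\to 0$, this places $(R_1,R_2)$ in $\mathrm{Cl}(\{(R_1',R_2')\mid K^*(R_1',R_2'|\vec{p_1},\vec{p_2},\vec{\sigma})<1\})$ for every admissible $\vec{\sigma}$, hence in $\bigcap_{\vec{\sigma}}\mathrm{Cl}(\cdot)$. Because the particular $\vec{p_1},\vec{p_2}$ were constructed from the chosen code sequence, $(R_1,R_2)$ lies in the union over $\vec{p_1},\vec{p_2}$, establishing \eqref{eq:77}.

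I do not anticipate any serious obstacle: the entire non-asymptotic machinery (Corollary~\ref{cor:7} and the operator monotonicity $\mathrm{Tr}[A_+]\leq\mathrm{Tr}[B_+]$ when $A\leq B$) is reused from Theorem~\ref{theo:3}. The only point requiring care is keeping the direction of the inequality correct when taking $\liminf$ of a difference and ensuring the penalty $3e^{-n\gamma}$ is genuinely negligible, which it is for each fixed $\gamma>0$; the closure operation then absorbs the $\gamma\to 0$ limit, just as in the transition from \eqref{eq:75} to \eqref{eq:76}.
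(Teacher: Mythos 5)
Your proof is correct and rests on exactly the same machinery as the paper's: Corollary~\ref{cor:7} specialized as in \eqref{eq:70}--\eqref{eq:74}, the rate shift by $2\gamma$, and absorption of the $\gamma\to 0$ limit into the closure. The only difference is organizational: you argue directly that $K^*(R_1-2\gamma,R_2-2\gamma|\vec{p_1},\vec{p_2},\vec{\sigma})\leq\liminf_n\varepsilon_n<1$ for every admissible $\vec{\sigma}$, whereas the paper argues by contraposition (assuming $(R_1,R_2)$ lies outside the right-hand side, using openness of the complement to get $K^*=1$ and concluding $\varepsilon_n\to 1$); the two are logically equivalent and yours is, if anything, slightly more streamlined since it mirrors the proof of Theorem~\ref{theo:3} verbatim.
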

\begin{proof}
				 Let $(\cdot)^c$ denote the complement and $\mathcal{R}(\vec{p_1},\vec{p_2},\vec{\sigma}):= \mathrm{Cl}( \{(R_1,R_2)| K^*(R_1,R_2|\vec{p_1},\vec{p_2},\vec{\sigma}) < 1\})$.
				 Suppose that for arbitrary $\vec{p_1}, \vec{p_2}$, there exists $\vec{\sigma}$ satisfying that $(R_1,R_2) \in \mathcal{R}(\vec{p_1}, \vec{p_2},\vec{\sigma})^c$, which means that $(R_1,R_2)$ belongs to the right hand side of \eqref{eq:77}. 
				Then $(R_1 - 2\gamma, R_2- 2\gamma)$ is also in $\mathcal{R}(\vec{p_1},\vec{p_2},\vec{\sigma})^c$ for sufficiently small positive number $\gamma$ since $\mathcal{R}(\vec{p_1},\vec{p_2},\vec{\sigma})^c$ is open.
				This implies that
				\begin{align}
								K^*(R_1-2\gamma,R_2-2\gamma|\vec{p_1},\vec{p_2},\vec{\sigma}) = 1. \label{eq:78}
				\end{align}
				On the other hand, for $\forall (n, M_1^{(n)}, M_2^{(n)}, \varepsilon_n)$-codes which satisfies \eqref{eq:65} and \eqref{eq:66} for $\gamma$ which is used in \eqref{eq:78} and for all sufficiently large $n$, the sequences of the input distributions which are set as \eqref{eq:68} and \eqref{eq:69} are clearly independent.
				Furthermore, for such codes we have \eqref{eq:74}. 

				Hence, from \eqref{eq:74} and \eqref{eq:78} we have
				\begin{align}
								\lim_{n \to \infty} \varepsilon_n = 1. \label{eq:79}
				\end{align}
				This means that $(R_1-\gamma, R_2-\gamma) \in C^*(\vec{W})^c$.
				For arbitrary positive numbers $S_1 < S'_1$ and $S_2 < S'_2$, if $(S_1, S_2) \in C^*(\vec{W})^c$, then clearly $(S'_1, S'_2) \in C^*(\vec{W})^c$ from the definition of $C^*(\vec{W})$.
				Therefore, $(R_1, R_2) \in C^*(\vec{W})^c$.
\end{proof}
The rest of this section is devoted to show how our results lead to the converse parts of classical capacity theorems obtained by Han \cite{Han} \cite{Hanbook}.
First, let $\vec{W_p} = \{(W^{(n)}_{p_1^{(n)}p_2^{(n)}},W^{(n)}_{x^{(n)}_1,p_2^{(n)}},W^{(n)}_{p_1^{(n)},x^{(n)}_2})\}$ be defined as
				\begin{align}
								W^{(n)}_{p_1^{(n)},p_2^{(n)}} &:= \sum_{x^{(n)}_1,x^{(n)}_2}p_1^{(n)}(x_1^{(n)})p_2^{(n)}(x_2^{(n)})W^{(n)}_{x^{(n)}_1,x^{(n)}_2},\label{eq:80}\\
								 W^{(n)}_{x^{(n)}_1,p_2^{(n)}} &:= \sum_{x^{(n)}_2}p_2^{(n)}(x_2^{(n)})W_{x^{(n)}_1,x^{(n)}_2},  \label{eq:81}\\
								 W^{(n)}_{p_1^{(n)},x^{(n)}_2} &:= \sum_{x^{(n)}_1}p_1^{(n)}(x_1^{(n)})W_{x^{(n)}_1,x^{(n)}_2}.\label{eq:82}
				\end{align}
				Then from Theorems \ref{theo:3} and \ref{theo:4}
				we have
				\begin{align}
								\hspace{-2em}C(\varepsilon|\vec{W}) \subset \bigcup_{\vec{p_1},\vec{p_2}} \mathrm{Cl}( \{(R_1,R_2)| K(R_1,R_2|\vec{p_1},\vec{p_2},\vec{W_p}) \leq \varepsilon\})  \label{eq:83}
				\end{align}
				and
				\begin{align}
								\hspace{-2em}C^*(\vec{W}) \subset \bigcup_{\vec{p_1},\vec{p_2}} \mathrm{Cl}( \{(R_1,R_2)| K^*(R_1,R_2|\vec{p_1},\vec{p_2},\vec{W_p}) < 1\}).  \label{eq:84}
				\end{align}
				In the classical case, recalling that the Yagi-Oohama bound implies the Han bound,
				we can easily show that
				\[ K(R_1,R_2|\vec{p_1}, \vec{p_2}, \vec{W_p}) \geq J(R_1,R_2| \mathbf{X_1}, \mathbf{X_2})\] and \[K^*(R_1,R_2|\vec{p_1}, \vec{p_2}, \vec{W_p}) \geq J^*(R_1,R_2| \mathbf{X_1}, \mathbf{X_2}),\]
				where $J$ and $J^*$ are difined in \cite{Han} \cite{Hanbook}. 
				Therefore we have
				\begin{align}
								\{(R_1,R_2)|K \leq \varepsilon\} &\subset \{(R_1,R_2)|J \leq \varepsilon\},\label{eq:85}\\
								\{(R_1,R_2)|K^* < 1\} &\subset \{(R_1,R_2)|J^* < 1 \}. \label{eq:86}
				\end{align}
				Note that Han also proved their direct parts in \cite{Han} \cite{Hanbook}, which establish capacity formulas:
				\begin{align}
								C(\varepsilon|\vec{W}) &= \bigcup_{\vec{p_2},\vec{p_2}}\mathrm{Cl}(\{(R_1,R_2)|J \leq \varepsilon\}),\label{eq:87}\\
								C^*(\vec{W})&= \bigcup_{\vec{p_1},\vec{p_2}}\mathrm{Cl}(\{(R_1,R_2)|J^* < 1 \}), \label{eq:88}
				\end{align}
				although \eqref{eq:88} is not explicitly presented in \cite{Han} \cite{Hanbook}.
				As a consequence, we have
				\begin{align}
								\bigcup_{\vec{p_1},\vec{p_2}}\mathrm{Cl(}\{(R_1,R_2)|K \leq \varepsilon\}) &= \bigcup_{\vec{p_1},\vec{p_2}}\mathrm{Cl}(\{(R_1,R_2)|J \leq \varepsilon\})\label{eq:89}\\
								\bigcup_{\vec{p_1},\vec{p_2}}\mathrm{Cl}(\{(R_1,R_2)|K^* < 1\}) &= \bigcup_{\vec{p_1},\vec{p_2}}\mathrm{Cl}(\{(R_1,R_2)|J^* < 1 \}) \label{eq:90}
				\end{align}
				in the classical case.

				In the quantum case, on the other hand, since we have not proven their direct parts, it is not clear whether Theorem \ref{theo:3} and \ref{theo:4} are tight.
\section{Concluding Remarks} \label{sec:9}
We have discussed lower bounds on the error probability for MACs in several settings.
We have obtained a fundamental inequality in the classical case (Theorem \ref{theo:1}) and in the quantum case (Theorem \ref{theo:2}).
Using the inequality the Yagi-Oohama bound has been generalized and strengthened in several directions and extended to the quantum case.

We have also shown converse results on the $\varepsilon$-capacity region problem and the strong converse region problem for general quantum MACs as applications of the fundamental inequality. 
It however remains to obtain a good upper bound on the error probability in order to determine these regions.

\end{document}